\newtheorem{The}{Theorem}
\newtheorem{Pro}[The]{Proposition}
\theoremstyle{definition}
\numberwithin{equation}{section} \numberwithin{The}{section}
\begin{document}

\title{Assessing dominance in survival functions: A test for right-censored data}
\author{F\'{e}lix Belzunce, Carolina Mart\'{\i}nez-Riquelme and Jaime Valenciano \\ Dpto. Estad\'{\i}stica e Investigaci\'{o}n Operativa \\ Universidad de Murcia \\ Facultad de Matem\'{a}ticas, Campus de Espinardo \\ 30100 Espinardo (Murcia), SPAIN \\ \texttt{belzunce@um.es, carolina.martinez7@um.es, jaime.valencianog@um.es} }
\date{April, 2025}
\maketitle

\maketitle

\begin{abstract} This paper proposes a new statistical test to assess the dominance of survival functions in the presence of right-censored data. Traditional methods, such as the log-rank test, are inadequate for determining whether one survival function consistently dominates another, especially when survival curves cross. The proposed test is based on the supremum of the difference between Kaplan-Meier estimators and allows for distinguishing between dominance and crossing survival curves. The paper presents the test?s asymptotic properties, along with simulations and applications to real datasets. The results demonstrate that the test has high sensitivity for detecting crossings and dominance compared to conventional methods. 

Keywords: Survival analysis, Right censoring, Kaplan-Meier estimator, stochastic dominance. 

\end{abstract}

\section{Introduction}
Longevity has gained immense importance in modern society due to demographic, social, economic, and scientific factors. As lifespans increase worldwide, the focus shifts not only to living longer but also to living healthier. This applies not only to healthy individuals but also to those undergoing treatment for illness. In such cases, analyzing time-to-death---or organ dysfunction---data under different conditions provides essential insights.

Survival analysis is a branch of statistics that focuses on the analysis of time-to-event data. A common challenge in this field arises when data are right-censored, meaning that for some subjects, the event of interest (e.g., death, failure, or relapse) has not occurred by the end of the observation period. For instance, in clinical studies on lung cancer, some patients may still be alive at the end of the study, meaning that their exact time of death is unknown. In such cases, their survival time is considered right-censored, as it is only known to exceed the duration of follow-up. Examples of censored data appear in sections 2 and 3. Since traditional statistical methods are unsuitable for handling censored data, specialized techniques are required to compare survival functions between groups. For a review of these techniques, see \cite{KleinMoeschberger:2003}.

Recall, that the survival function, $S(t)$, represents the probability that the event of interest occurs later than time $t$:

$$
S(t) = P(T > t),
$$
where $T$ is the time to the event. When one survival function lies consistently below (or above) another, it indicates that the probability of surviving beyond any time $t$ is always lower (or higher) for one group compared to the other. From a probabilistic perspective, this implies that the random lifetime is generally smaller (or larger) for one group than for the other. Detecting whether this situations occurs -or whether there is at least one crossing point among the survival functions- is crucial for identifying the factors or conditions associated with longer lifetimes. 

The primary approach for performing such an analysis using sample observations involves comparing non-parametric estimators of the survival functions and applying a significance test to determine whether they can be considered equal or not. When right-censoring occurs the Kaplan-Meier estimator is a widely used non-parametric method for estimating the survival function. Additionally, several statistical tests are available to assess whether survival functions differ.

The log-rank test is the most commonly used statistical test for comparing survival curves. It compares the observed and expected number of events at each time point across groups under the null hypothesis that the survival functions are identical. Other tests include: 
\begin{enumerate}

\item The Gehan-Wilcoxon test, which assigns more weight to early time points in the survival curve, making it more sensitive to differences in survival that occur earlier in the study period.

\item The Tarone-Ware test, which is a compromise between the log-rank and Wilcoxon tests.

\item Peto-Peto-Prentice test, an adaptation of the log-rank test that also places more emphasis on earlier time points.

\end{enumerate}

All these tests compare survival only at observed time points, assuming the null hypothesis that survival functions are equal. For tests that assess survival differences over an interval, we have the approach proposed by \cite{Gill:1980}. However,  traditional methods for detecting differences between survival functions are not suitable for addressing the aforementioned problem, do not determine whether one survival curve dominates the other, and are inefficient when survival curves cross.

This work proposes a new test based on the supremum of the difference between Kaplan-Meier estimators over an interval, capable of detecting differences and distinguishing whether the survival functions exhibit ordering or crossings.

The organization of this paper is as follows. In Section 2, we introduce a new test for assessing dominance between two survival functions in the presence of censored data. We examine its asymptotic properties and consistency. Next, in Section 3, we show the implementation of the new test and apply to real datasets.  In Section 4, we conduct a simulation study to evaluate its performance under different scenarios. Throughout the paper, we assume that the samples are independent. Finally, in Section 5 we provide a discussion on advantages and limitations.

\section{Test for comparing survival functions. The case of independent samples}

As mentioned in the introduction, in this paper we consider right-censored data, that is, in some cases we know the time at which some event occurs (e.g., failure, death), but in some others we know that the event has not yet happened by the end of the observation period, but we do not know the exact time at which it will occur.

Let us consider the \texttt{lung} dataset included in the \texttt{survival} package in \texttt{R}, see \cite{R-survival}.   The lung dataset originates from the North Central Cancer Treatment Group study and includes 228 patients, 90 female and 138 male, with advanced lung cancer.  One of the variables of interest is overall survival, measured in days from the initial diagnosis to lost to follow-up or death.  

If we want to compare the survivals for females and males, a first approach is to plot the Kaplan-Meier estimators of the survivals, for both groups.

Formally, let $T_i$, $i=1,\ldots,n$ be independent and identically distributed death times with common survival function $S_T$, and let $C_i$, $i=1,\ldots,n$, be independent and identically distributed censoring times with common survival function $H_C(t)$. It is assumed that failure and censoring times are independent. The dataset consists of bivariate random vectors $(X_i,\delta_i^X)$, where $X_i = T_i \wedge C_i$, where $\wedge$ denotes the minimum, with common distribution function $F$, and $\delta_i^X = I(T_i \le  C_i)$ indicates whether $X_i$ is censored, being $I(\cdot)$ the indicator function. One of the main issues in this context is to provide information about $S_T$ from $(X_i,\delta_i^X)$, $i=1,\ldots,n$. Denoting by $t_1,\ldots, t_k$ the observed death times (non censored observations), the Kaplan-Meier or product limit estimator of the survival function $S_T$ is 
$$
\hat S_{T,n} (t) = \prod_{t_j\le t}\left(1 -  \frac{d_j}{n_j}\right),
$$ 
where $d_j$ is the number of deaths at time $t_j$ and $n_j$ the number of survivors before time $t_j$.

A key result for inferential purposes is the weak convergence of the Kaplan-Meier estimator.  Under the previous notation, \cite{BreslowCrowley:1979} (see also \cite{Gillespie.etal:1979}) proves that for $\tau < + \infty$, such that $S_T(\tau) > 0$, the empirical process
\[
\left\{ n^{1/2}\left( \hat S_{n,T}(t) - S_T(t)\right), t\in (0,\tau)\right\},
\]
converges weakly to a Gaussian process, $G_T$ with mean 0 and covariance matrix given by
\[
cov(G_T(s), G_T(t))=S_T(s)S_T(t)a(\min(s,t))
\]
where 
\[
a(x)=\int_0^x \frac{-dS_T(t)}{(S_T(t))^2 H_C(t)},
\]
under the conditions $S_T$ and $H_C$ being continuous.

We now return to the lung cancer dataset introduced earlier. Figure~\ref{fig1} displays the Kaplan-Meier estimators for both groups,  suggesting that female patients exhibit a higher survival probability across all time points. The traditional methods only give information on equality or differences of the survival function and therefore, it would be more informative to test whether one survival function dominates the other one against the alternative hypothesis that there is, at least, one crossing point between the two survival functions.

\begin{figure}[h!]
    \centering
    \includegraphics[width=0.6\textwidth]{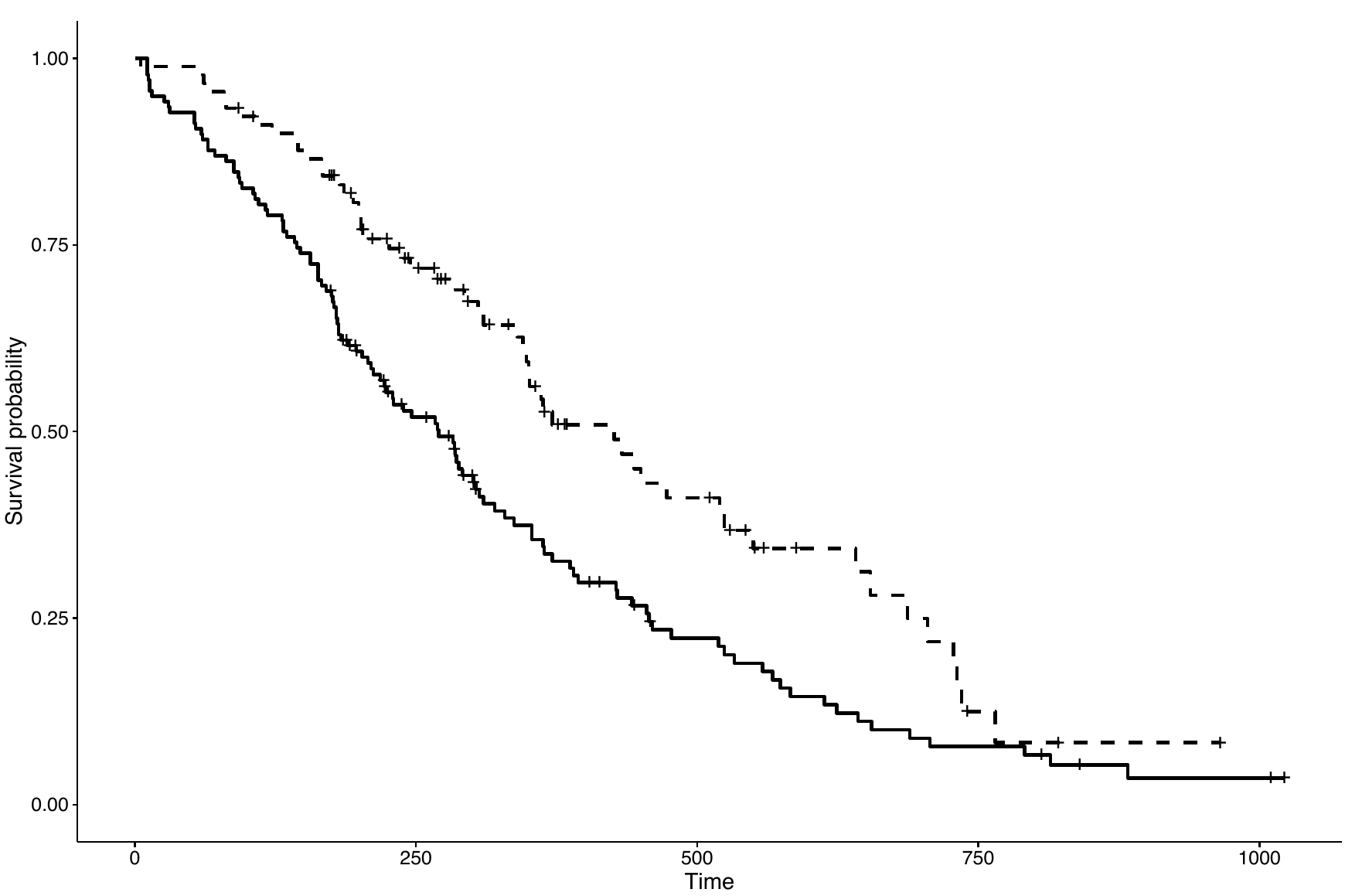}
    \caption{Survival plots for females (dashes) and males (solid).}
    \label{fig1}
\end{figure}

When one survival function lies below another, we use the concept of stochastic ordering. Formally, we say that $T$ is smaller than $U$ in the stochastic, or first stochastic dominance, order, denoted as $T \leq_{st} U$, if $P(T > t) \leq P(U > t)$ for all $t$, meaning that the survival probability of $T$ is always lower than that of $U$. Determining whether one distribution consistently exhibits better survival characteristics than another is essential. Whether this problem has been addressed by Barret and Donald (2003) in the context of stochastic dominance or ordering for non-censored data, there are limited statistical tools available in the case of censored data. 

Following the approach by \cite{BarretDonald:2003}, the main purpose of this paper is to provide statistical methods for testing the ordering of two survival functions by considering Kolmogorov-Smirnov type test in the presence of right-censored data based on the supremum of the difference between the two Kaplan-Meier estimators.

To fix notation, we consider another set $U_i$, $i=1,\ldots,m$ of independent and identically distributed death times with common survival function $S_U$, let $D_i$, $i=1,\ldots,m$, be independent and identically distributed censoring times with common survival function $H_D(t)$. Again, it is assumed that failure and censoring times are independent. The second dataset consists of bivariate random vectors $(Y_i,\delta_i^Y)$, where $Y_i = U_i \wedge D_i$ and $\delta_i^Y = I(U_i \le  D_i)$ indicates whether $Y_i$ is censored or not. Let us denote by $\hat S_{U,m}$ the corresponding Kaplan-Meier estimator of the survival curve $S_U$. We assume that these observations are independent of the previous observations. 

Under the previous notation our main objective is to test the null hypothesis 
\[
H_0: S_T(t)\le S_U(t),\text{ for all }t\in(0,\tau)
\]
against the alternative hypothesis
\[
H_1: S_T(t)> S_U(t),\text{ for some }t\in(0,\tau),
\]
using a test statistic based on the supremum of the difference of the Kaplan-Meier estimators. 

More precisely, we consider the test statistic 
\[
\Delta_{n,m}=\left(\frac{nm}{n+m}\right)^{1/2}\sup_{t \in (0,\tau)}\left\{\hat S_{T,n}(t) - \hat S_{U,m}(t)\right\}. 
\] 

According to this, the null hypothesis is rejected if $\Delta_{n,m}> \delta$ where the critical value $\delta$ would be determined in terms of the distribution of $\Delta_{n,m}$. However, it is not feasible to obtain the exact distribution of
such statistic, and we would rather use the asymptotic distribution. To derive the asymptotic properties we will assume the following assumptions: 

A1: The value $\tau$ satisfies that $S_T(\tau),S_U(\tau) > 0$.

A2: When $n,m\rightarrow +\infty$ then $n/(n+m) \rightarrow \lambda$, with $0< \lambda <1$.

Next, we provide an asymptotic upper bound under $H_0$ for $P(\Delta_{n,m}>\delta)$ in terms of its asymptotic distribution. 

\begin{The}\label{the1} Following the previous notation and assumptions A1 and A2, we get, under $H_0$, that 
\begin{equation}\label{bound-pvalue1}
\lim_{n,m\rightarrow +\infty}P\left(\Delta_{n,m} > \delta \right)\le P\left(\sup_{t\in(0,\tau)} G_{T,U}(t)> \delta\right),
\end{equation}
where $G_{T,U}$ is Gaussian process with 0 mean and covariance matrix given by 
\[
cov(G_{T,U}(s),G_{T,U}(t))=(1-\lambda) cov(G_T(s), G_T(t)) +\lambda cov(G_U(s), G_U(t)).
\]
\end{The}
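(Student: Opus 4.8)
The plan is to split the difference of the two estimators into a centered stochastic part and a deterministic drift, to argue that under $H_0$ the drift can only lower the supremum, and then to pass the supremum through the weak limit of the centered part. Concretely, writing
\[
\hat S_{T,n}(t)-\hat S_{U,m}(t)=\bigl[\hat S_{T,n}(t)-S_T(t)\bigr]-\bigl[\hat S_{U,m}(t)-S_U(t)\bigr]+\bigl[S_T(t)-S_U(t)\bigr]
\]
and multiplying by $(nm/(n+m))^{1/2}$, I obtain $\Delta_{n,m}=\sup_{t}\{W_{n,m}(t)+r_{n,m}(t)\}$, where $W_{n,m}$ denotes the centered process and $r_{n,m}(t)=(nm/(n+m))^{1/2}(S_T(t)-S_U(t))$. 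The key observation is that under $H_0$ one has $S_T(t)-S_U(t)\le 0$ for every $t\in(0,\tau)$, hence $r_{n,m}(t)\le 0$ and therefore $\Delta_{n,m}\le \sup_{t\in(0,\tau)}W_{n,m}(t)$ surely. This reduces the problem to the limiting behaviour of $\sup_t W_{n,m}$.

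Next I would establish the weak convergence $W_{n,m}\Rightarrow G_{T,U}$. Factoring the normalising constant as
\[
\Bigl(\tfrac{nm}{n+m}\Bigr)^{1/2}=\Bigl(\tfrac{m}{n+m}\Bigr)^{1/2}n^{1/2}=\Bigl(\tfrac{n}{n+m}\Bigr)^{1/2}m^{1/2}
\]
and using A2 together with the Breslow--Crowley convergence $n^{1/2}(\hat S_{T,n}-S_T)\Rightarrow G_T$ and $m^{1/2}(\hat S_{U,m}-S_U)\Rightarrow G_U$, the two normalising factors converge to $(1-\lambda)^{1/2}$ and $\lambda^{1/2}$ respectively. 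Since the two samples are independent, the joint process converges weakly to a pair of independent Gaussian processes $(G_T,G_U)$, so that by Slutsky's theorem $W_{n,m}\Rightarrow (1-\lambda)^{1/2}G_T-\lambda^{1/2}G_U=:G_{T,U}$. Because $G_T$ and $G_U$ are independent, this limit is a centered Gaussian process with covariance exactly $(1-\lambda)\,cov(G_T(s),G_T(t))+\lambda\,cov(G_U(s),G_U(t))$, matching the statement.

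Finally I would push the supremum through the weak limit. The map $w\mapsto\sup_{t\in(0,\tau)}w(t)$ is continuous (Lipschitz in the uniform norm), so by the continuous mapping theorem $\sup_t W_{n,m}(t)\Rightarrow\sup_t G_{T,U}(t)$ as real random variables. Combining this with the deterministic bound from the first step and the inclusion $\{\Delta_{n,m}>\delta\}\subseteq\{\sup_t W_{n,m}\ge\delta\}$,
\[
\limsup_{n,m\to+\infty}P(\Delta_{n,m}>\delta)\le\limsup_{n,m\to+\infty}P\Bigl(\sup_t W_{n,m}(t)\ge\delta\Bigr)\le P\Bigl(\sup_t G_{T,U}(t)\ge\delta\Bigr),
\]
where the last inequality is the portmanteau theorem applied to the closed set $[\delta,+\infty)$. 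Since the law of the supremum of a nondegenerate Gaussian process is continuous, $P(\sup_t G_{T,U}(t)\ge\delta)=P(\sup_t G_{T,U}(t)>\delta)$, which yields (\ref{bound-pvalue1}).

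The main obstacle is the weak-convergence step in a function space: one must fix the space (Skorokhod $D(0,\tau)$, or $C(0,\tau)$ under the continuity of $S_T,H_C,S_U,H_D$) on which Breslow--Crowley applies, verify that assumption A1 keeps the limiting variance finite up to $\tau$ so that the supremum functional is well behaved near the endpoint, and justify that independence of the samples lifts marginal weak convergence to joint weak convergence of the pair. The care about open versus closed sets in the portmanteau step, and the continuity of the law of $\sup_t G_{T,U}$ at $\delta$, are the remaining points to check but are comparatively routine.
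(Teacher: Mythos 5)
Your proposal is correct and follows essentially the same route as the paper: your centered process $\sup_t W_{n,m}(t)$ is exactly the paper's $\Delta_{n,m}^*$, the sign of the drift under $H_0$ gives the almost-sure domination $\Delta_{n,m}\le\Delta_{n,m}^*$, and the weak convergence of the joint centered pair plus the continuous mapping theorem yields the limit of $P(\Delta_{n,m}^*>\delta)$. The only difference is that you are somewhat more careful than the paper about the portmanteau step (closed versus open sets and continuity of the law of the supremum), which the paper glosses over by asserting the limit equality directly.
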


\begin{proof}As noticed previously,  the empirical processes $n^{1/2}\left( \hat S_{n,T} - S_T\right)$ and $m^{1/2}\left( \hat S_{m,U} - S_U\right)$ converge weakly to Gaussian processes $G_T$ and $G_U$, respectively. The independence of the samples implies that
\[
\left(n^{1/2}\left( \hat S_{n,T} - S_T\right),
m^{1/2}\left( \hat S_{m,U} - S_U\right)\right)
\]
converges weakly to $(G_T,G_U)$. Now, by the continuity mapping theorem we have that 
\[
\Delta_{n,m}^*= \left(\frac{nm}{n+m}\right)^{1/2}\sup_{t \in (0,\tau)}\left\{\hat S_{T,n}(t)- \hat S_{U,m}(t)- (S_T(t)-S_U(t))\right\}
\]
converges weakly to $\sup_{t\in (0,\tau)}\left\{G_{T,U}(t) \right\}$ where $G_{T,U}$ is a Gaussian process with 0 mean and covariance matrix given by 
\[
cov(G_{T,U}(s),G_{T,U}(t))=(1-\lambda) cov(G_T(s), G_T(t)) +\lambda cov(G_U(s), G_U(t)).
\]

Now, under $H_0$ we get that $\Delta_{n,m} \le \Delta_{n,m}^* (a.s)$ and therefore $\Delta_{n,m} \le_{st} \Delta_{n,m}^*$, see Theorem 1.A.1 in \cite{ShakedShanthikumar:2007}. As a consequence, we get
\[
\lim_{n,m\rightarrow +\infty}P(\Delta_{n,m}>\delta) \le \lim_{n,m\rightarrow +\infty}P(\Delta_{n,m}^* >\delta) = P\left(\sup_{t\in(0,\tau)} G_{T,U}(t)> \delta\right).
\]
\end{proof}

This test is consistent as can be seen next. 

\begin{Pro}\label{prop1} Under the conditions of previous theorem and  under $H_1$,  it holds that 
\[
\lim_{n,m\rightarrow +\infty}P\left(\Delta_{n,m} > \delta \right)=1,
\]
for any $\delta \in \mathbb R$.
\end{Pro}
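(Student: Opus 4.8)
The plan is to exploit the fact that the supremum over $(0,\tau)$ dominates the difference of the estimators at any single point, and then to combine the consistency of the Kaplan--Meier estimators with the divergence of the normalizing constant. Under $H_1$ there is a point $t_0 \in (0,\tau)$ at which the ordering is strictly violated, say $S_T(t_0) - S_U(t_0) = \epsilon > 0$. First I would bound the statistic from below by its integrand evaluated at this fixed $t_0$:
\[
\Delta_{n,m} \ge \left(\frac{nm}{n+m}\right)^{1/2}\left\{\hat S_{T,n}(t_0) - \hat S_{U,m}(t_0)\right\}.
\]

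Next I would establish that the bracketed term converges in probability to $\epsilon$. The weak convergence of each empirical process quoted before Theorem~\ref{the1} gives in particular that $n^{1/2}(\hat S_{T,n}(t_0) - S_T(t_0))$ is $O_p(1)$, whence $\hat S_{T,n}(t_0) \to_p S_T(t_0)$, and likewise $\hat S_{U,m}(t_0) \to_p S_U(t_0)$; assumption A1 guarantees these limits are meaningful at $t_0 \in (0,\tau)$, and the independence of the two samples yields $\hat S_{T,n}(t_0) - \hat S_{U,m}(t_0) \to_p S_T(t_0) - S_U(t_0) = \epsilon$. Simultaneously, assumption A2 forces the normalizing constant to diverge: writing $\frac{nm}{n+m} = (1/n + 1/m)^{-1}$ shows that $\left(\frac{nm}{n+m}\right)^{1/2} \to +\infty$ as $n,m \to +\infty$.

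Finally I would combine these two facts on an explicit high-probability event. Fix any $\eta > 0$. By the convergence in probability there is an index beyond which $P\!\left(\hat S_{T,n}(t_0) - \hat S_{U,m}(t_0) > \epsilon/2\right) > 1 - \eta$, and on that event the lower bound above is at least $\left(\frac{nm}{n+m}\right)^{1/2}\epsilon/2$. Because the constant diverges, this quantity eventually exceeds any prescribed $\delta \in \mathbb{R}$, so for all sufficiently large $n,m$ we obtain $P(\Delta_{n,m} > \delta) \ge 1 - \eta$. Letting $\eta \to 0$ gives $\lim_{n,m\to+\infty} P(\Delta_{n,m} > \delta) = 1$.

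There is no serious obstacle here: the argument is essentially that the power of the test tends to one, driven by a diverging scaling factor multiplied by a sample quantity that stabilizes at a strictly positive constant. The only point deserving care is the mode of convergence, namely that the ``diverging constant times $o_p(1)$'' bookkeeping must be carried out on the explicit high-probability event above rather than by naively multiplying limits, since one factor diverges deterministically while the other converges only in probability. Invoking consistency of the Kaplan--Meier estimators at the single point $t_0$, rather than uniform consistency over $(0,\tau)$, keeps this step minimal.
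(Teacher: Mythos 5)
Your proof is correct, and it follows the same basic mechanism as the paper's: a diverging normalizing factor multiplied by a sample quantity that stabilizes at a strictly positive value. The execution differs in one genuine respect. The paper keeps the supremum throughout and asserts that $\sup_{t\in(0,\tau)}\{\hat S_{T,n}(t)-\hat S_{U,m}(t)\}$ converges almost surely to $\sup_{t\in(0,\tau)}\{S_T(t)-S_U(t)\}>0$, which implicitly relies on uniform strong consistency of the Kaplan--Meier estimators on $(0,\tau)$ (dismissed there as ``not difficult to see''), and then passes to the probability statement via a Fatou-type inequality for events. You instead bound $\Delta_{n,m}$ from below by the single evaluation at $t_0$, so that only pointwise consistency in probability at $t_0$ is needed --- and that follows immediately from the weak convergence already quoted before Theorem~\ref{the1}, since $n^{1/2}(\hat S_{T,n}(t_0)-S_T(t_0))=O_p(1)$. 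Your explicit high-probability-event bookkeeping then delivers $\liminf P(\Delta_{n,m}>\delta)\ge 1-\eta$ for every $\eta>0$ directly. What your route buys is economy of hypotheses: no appeal to uniform almost-sure convergence, and a fully quantitative final step; what the paper's route buys is a statement about the limiting behaviour of the supremum itself, which is slightly more informative but not needed for the conclusion. Both arguments are valid.
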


\begin{proof}Under $H_1$ there exists a $t_0 \in (0, \tau)$ such that $ S_T(t_0) - S_U(t_0) >0$, and therefore 
\[
 \sup_{t\in(0,\tau)} \left\{S_T(t) - S_U(t) \right\} >0.
\]

It is not difficult to see that 
\[
 \lim_{n,m\rightarrow +\infty} \sup_{t\in(0,\tau)} \left\{
\hat S_{T,n}(t) - \hat S_{U,m}(t) \right\} =  \sup_{t\in(0,\tau)} \left\{ S_T(t_0) - S_U(t_0) \right\} >0, \text{ a.s.}, 
 \]
and clearly 
\[
 \lim_{n,m\rightarrow +\infty} \left(\frac{nm}{n+m}\right)^{1/2} \sup_{t\in(0,\tau)} \left\{\hat S_{T,n}(t) - \hat S_{U,m}(t) \right\}=  +\infty, \text{ a.s.}. 
 \]
 
 Now the result follows observing that 
\[
\lim_{n,m\rightarrow +\infty} \inf P\left(\Delta_{n,m} > \delta \right) \ge P(\lim_{n,m\rightarrow +\infty} \inf\{\Delta_{n,m} > \delta\}) =1,
\]
for any $\delta \in \mathbb R$. 
\end{proof}

We have introduced the theoretical framework of the proposed test and its motivation compared to traditional methods. Next, we will apply this methodology to real-world datasets and simulations to assess its performance in practical scenarios.

\section{Implementation and application to some datasets}

To provide the upper bound \eqref{bound-pvalue1} for the $p$-value we need to compute the probability for the supremum of a Gaussian process.  A common computational method is to approximate the probability via Monte Carlo simulation. The idea is to generate  $N$  samples of the Gaussian process over a discretized grid,  then compute the supremum for each sample and estimate the probability using empirical frequency.

However, following \cite{BelzunceMartinez:2023}, we propose a more efficient method using the \texttt{mvtnorm} package in \texttt{R}, see \cite{R-mvtnorm}.  Instead of relying exclusively on Monte Carlo simulations, this approach reduces the computational burden associated with the covariance matrix calculation, making the process faster.

Given a discretized  grid $0 = \tau_0 < \tau_1 <\cdots < \tau_m$ on $(0,\tau)$ the Monte Carlo method provides an approximation of $P\left(\sup_{t\in(0,\tau)} G_{T,U}(t)> \delta\right)$, where $\delta$ is the value of $\Delta_{n,m}$ at a given sample, by an estimation of $P(max(G_{T,U}(\tau_0), \ldots, G_{T,U}(\tau_m)) > \delta)$ based on the simulations.  This probability can be computed as
\[
P(max(G_{T,U}(\tau_0), \ldots, G_{T,U}(\tau_m)) > \delta) = 1 - P(G_{T,U}(\tau_0)\le \delta,  \ldots, G_{T,U}(\tau_m))\le \delta).
\]

Given that $(G_{T,U}(\tau_0), \ldots, G_{T,U}(\tau_m))$ follows a multivariate normal distribution, this expression can be efficiently computed in \texttt{R} using the multivariate normal distribution function from the \texttt{mvtnorm} package.

Additionally,  to construct the covariance matrix we require theoretical values of the survival functions $S_T$, $S_U $, and their corresponding densities,  and the survival functions $H_C$,  and $H_D$.  Since these values are unknown, we use empirical values.  The survival functions are replaced by the corresponding Kaplan-Meier estimators and the distribution functions are replaced by their empirical counterpart.  For density estimation, we apply the Foldes-Rejtó-Winter smoothed density estimator, which incorporates plug-in bandwith selection, as proposed by \cite{Jacome.etal:2008} and \cite{CaoLopez:2017}.  This estimation is implemented in \texttt{R} using the  \texttt{survPresmooth} package, see \cite{R-survPresmooth}. 

Finally, to asses assumption A1 we will take $\tau=\min(\max(X_i's), \max(Y_j's))$.

Next, we provide applications of the previous test to some datasets, starting with the lung cancer dataset introduced in Section 2.  Here,  we aim to assess whether the survival function of male patients lies consistently below that of female patients, as Figure~\ref{fig1} suggests.

Applying our test, we obtain $\Delta_{n,m}=-0.0375$ and an upper bound for the $p$-value of 0.9955, indicating no statistical evidence to reject the null hypothesis that female survival dominates male survival.

To determine whether this dominance is strict or the two survival functions are equal, we apply some classical tests to detect differences among two survival functions.  The results are presented in Table \ref{table1}. Since all tests yield extremely low $p$-values, we conclude that female survival strictly dominates male survival.  

\begin{table}[h]
\centering
{	\begin{tabular}{crrcrrc}
	   Test &  Test statistic & \( p\text{-value} \) \\ \hline
	Log-Rank & 10.3267 & 0.0013 \\
	Gehan & 12.4721 &  0.0004 \\
	Tarone--Ware & 12.4555 & 0.0004 \\
	Peto--Peto & 12.7078 & 0.0003 \\
	Modified Peto--Peto  & 12.7091 & 0.0003 \\ \hline
	\end{tabular}}
	\label{table1}
	\caption{Test statistics and $p$-values of some classical tests for the \texttt{lung} dataset.}
\end{table}

Next, we consider times to infection of kidney dialysis patients. This dataset records the time to first infection for patients undergoing kidney dialysis, 43 patients utilized a surgically placed catheter, Group 1, and 76 patients utilized a percutaneous placement of their catheter, Group 2. Since not all patients experience an infection during the observation period, the data includes right-censored observations. The time to infection is measured in days. The example is taken from \cite{KleinMoeschberger:2003}, Example 1.4, page 6.

In this case we compare times to infection for Group 1 and Group 2.  Figure~\ref{fig2} presents the Kaplan-Meier survival curves for both groups, suggesting a crossing point between them.

\begin{figure}[h!]
    \centering
    \includegraphics[width=0.6\textwidth]{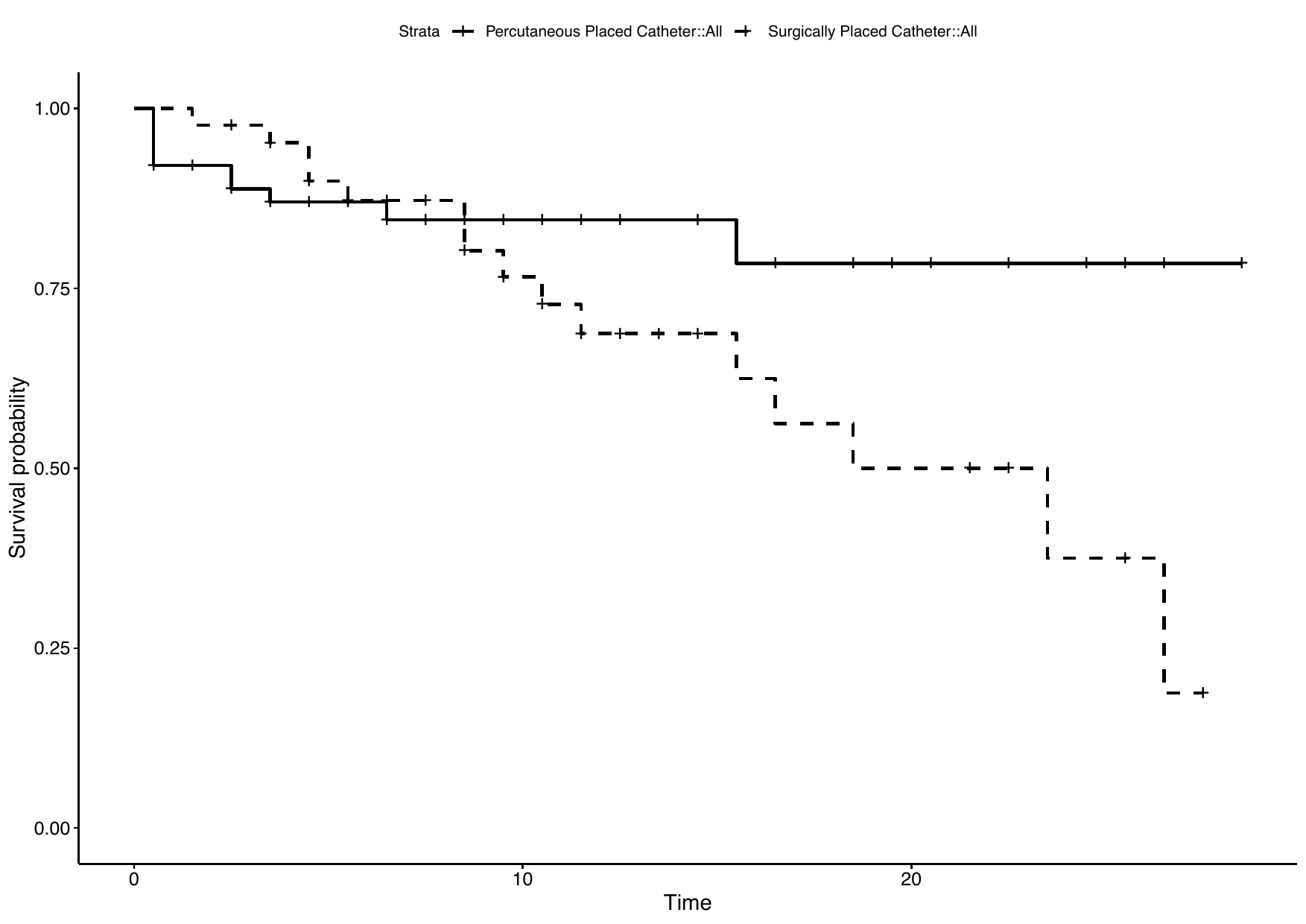}
    \caption{Survival plots of time to infection for percutaneous placement of catheter (dashes) and surgically placed catheter (solid).}
    \label{fig2}
\end{figure}

Applying our proposed test, we obtain $\Delta_{n,m}=3.1305$ and an upper bound for the $p$-value of 0.0002, which provides statistical evidence that at least one crossing point exists between the two survival functions, confirming what is observed in the graphical representation.

On the other hand, when applying traditional tests for comparing survival functions, we obtain Table \ref{table2}, see Table 7.3 in \cite{KleinMoeschberger:2003}, page 210: 

\begin{table}[h]
\centering
{	\begin{tabular}{crrcrrc}
		Test &  Test statistic & \( p\text{-value} \) \\ \hline
		Log-Rank & 2.5295 & 0.11 \\
        Gehan  & 0.0020 & 0.96 \\
        Tarone--Ware & 0.4027 & 0.52 \\
        Peto--Peto & 1.3991 & 0.23 \\
        Modified Peto--Peto & 1.2759 & 0.25 \\ 
        \hline
	\end{tabular}}
	\label{table2}
	\caption{Test statistics and $p$-values of some classical tests for the kidney \break dialysis dataset.}
\end{table}

None of these tests detect a significant difference between the two survival functions, which contradicts the results of our supremum-based Kaplan-Meier difference test.  This suggests that traditional methods may lack the sensitivity to detect structural changes in survival relationships, particularly in the presence of crossing survival curves.

\section{Simulation studies}

To show the performance of our test in different scenarios we carry out Monte Carlo experiments for small and large samples. The simulation studies are performed in several situations where the dominance among the two survivals either holds or does not hold, for different samples sizes and different rates of censoring.

We have considered gamma distributed survival times and exponentially distributed censoring times.  Let us describe the different cases that we have considered. 

Case 1: In this case $T$ follows a gamma distribution with shape parameter 2 and scale parameter 1, $T\sim Gamma(2,1)$, and $U\sim Gamma(3,1)$. It is known that in this case the survival function of $U$ dominates the survival function of $T$. 

Case 2: In this case $T\sim Gamma(2,1)$ and $U\sim Gamma(2.2,1)$. It is known that in this case the survival function of $U$ dominates the survival function of $T$, but they are very close. 

Case 3: In this case $T\sim Gamma(3,5)$ and $U\sim Gamma(6,2)$. It is known that in this case the survival functions cross at one point. 

Case 4: In this case $T\sim Gamma(2,2)$ and $U\sim Gamma(3,1)$. It is known that in this case the survival functions cross at one point, but it is difficult to detect the crossing point.

In Figure~\ref{fig3} we have plot the different cases, where we observe ordered survival functions in cases 1 and 2 and a crossing point in cases 3 and 4.

\begin{figure}[h!]
    \centering
    \includegraphics[width=1.0\textwidth]{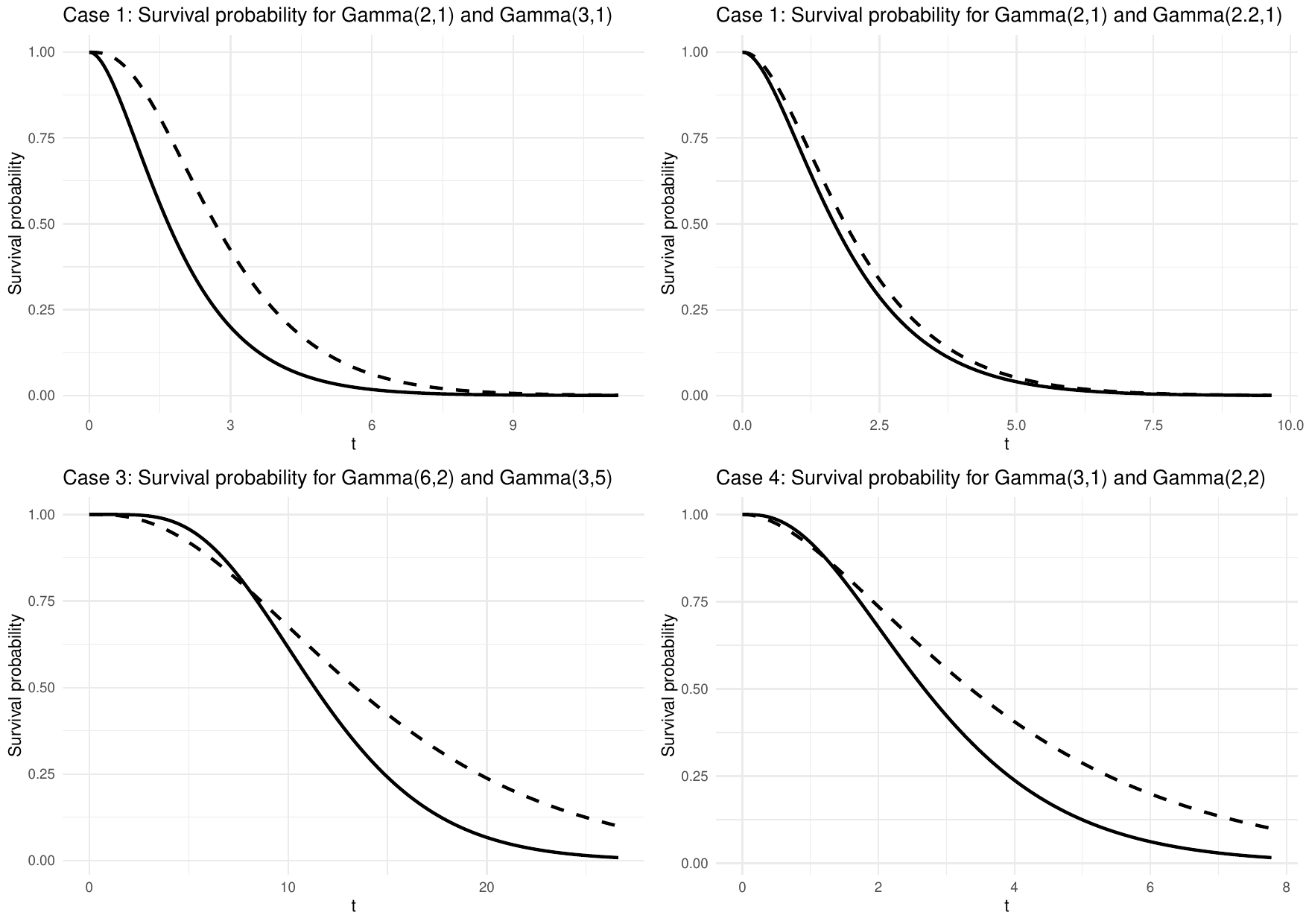}
    \caption{Survival plots for $T$ (dashes) and $U$ (solid), for cases 1--4.}
    \label{fig3}
\end{figure}

For each case we have considered exponentially distributed censor times with different shape parameter, $\lambda$. In each case, we have selected two different $\lambda$'s  to get approximately a rate of 20\% and a 50\% of censored observations. To get a 20\% and a 50\% rate of censoring we have considered $\lambda$ equal to $-\log(0.9)/q_{20}$ and $\log(2)/q_{50}$, respectively, where $q_{20}$ and $q_{50}$ are the 20\% and 50\% quantile of the corresponding gamma distribution.  

We have performed 1000 Monte Carlo replications for each case with different sample sizes, $n=50, 100, 150, 200, 250$ and $500$, in which the rejection rates of the null hypothesis have been computed for the two conventional significance levels of $0.05$ and $0.01$. The number of points of the grid is $m=100$ in every replication.

Table \ref{montecarlo} summarizes the performance of the proposed test under four scenarios, Cases 1--4, with varying sample sizes, $n=50, 100, 150, 200, 250, 500$ and censoring levels,  20\% vs 50\%.

Some key observations are the following: 

\begin{enumerate}

\item When one survival function truly dominates the other, Cases 1 and 2, the test almost never rejects the null hypothesis. Even with small samples, the rejection rate is essentially 0 at both $\alpha=0.05$ and $0.01$, indicating no false alarms when the survival functions are ordered.

\item In scenarios where the survival functions cross, Cases 3 and 4, the test's power to reject the null hypothesis increases sharply with larger sample sizes. For small samples, e.g. $n=50$, the rejection rates are modest, but with larger samples they increase,  exceeding 90\% by $n=200$ and approaching 100\% at $n=250$. This shows the test is very sensitive to crossings given sufficient data.

\item Heavier censoring reduces the test's power to detect differences. At a given sample size, a 50\% censoring rate yields lower rejection rates than 20\% censoring. For example, in Case 3 with $n$=100, the rate of rejection is 86.3\% with 20\% censoring versus 64\% with 50\% censoring at $\alpha=0.05$. Nonetheless, as sample size grows, even with 50\% censoring the power eventually becomes high, reaching 99\% at $n$=500 in crossing cases.

\end{enumerate}

These results demonstrate that the proposed test is reliable for confirming the order of survival functions and highly effective at flagging crossings, provided the data are sufficient. The test maintains a low false-positive rate when survival curves are truly ordered, giving confidence that a non-significant result indeed suggests no crossing. Conversely, a significant result from this test is strong evidence of at least one crossing point between survival curves. Sufficient sample size is crucial for the test to detect crossings, especially if censoring is heavy. In practice, researchers should plan for larger samples and/or seek to reduce censoring to ensure the test has high power to uncover crossing survival patterns . This sensitivity to crossings is a key advantage over traditional survival comparison tests (like log-rank), which often fail to detect any difference when survival curves intersect. Such improved detection can lead to better insights in studies where survival functions may cross, ensuring that meaningful survival differences are not overlooked.

\begin{table}[h]
\centering
{	\resizebox{!}{0.072\textheight}{\begin{tabular}{cccccccccccc}
		$n$ & \multicolumn{2}{c}{Case 1} & &\multicolumn{2}{c}{Case 2} & &\multicolumn{2}{c}{Case 3} & &\multicolumn{2}{c}{Case 4} \\ \hline
		 & 20\% & 50\% & &20\% & 50\% & &20\% & 50\% & &20\% & 50\% \\ 
		50  & 0 (0) & 0 (0) && 0.010 (0.002) & 0.012 (0.003) && 0.541 (0.274) & 0.359 (0.144) && 0.505 (0.246) & 0.309 (0.199)\\ 
		100 & 0 (0) & 0 (0) && 0.008 (0.002) & 0.010 (0.002) && 0.863 (0.629) & 0.640 (0.356) && 0.770 (0.499) & 0.537 (0.266)\\ 
		150 & 0 (0) & 0 (0) && 0.003 (0.001) & 0.004 (0.001) && 0.966 (0.848) & 0.822 (0.582) && 0.907 (0.756) & 0.724 (0.474)\\ 
		200 & 0 (0) & 0 (0) && 0.002 (0.000) & 0.003 (0.000) && 0.990 (0.952) & 0.933 (0.774) && 0.962 (0.867) & 0.838 (0.630)\\ 
        250 & 0 (0) & 0 (0) && 0.002 (0.000) & 0.002 (0.000) && 1.000 (0.977) & 0.976 (0.889) && 0.991 (0.935) & 0.921 (0.761) \\
        500 & 0 (0) & 0 (0) && 0.000 (0.000) & 0.002 (0.000) && 1.000 (1.000) & 0.998 (0.994) && 1.000 (1.000)  & 0.996 (0.981) \\
\hline
	\end{tabular}}}
	\caption{Rejection rates for the null hypothesis for $\alpha = 0.05$ ($0.01$)}
	\label{montecarlo}
\end{table}

\section{Discussion}

In this paper, we introduced a new test for assessing the dominance of survival functions in the presence of right-censored data. Unlike traditional methods, which primarily detect whether survival curves differ, our approach provides additional insight by distinguishing between strict ordering and potential crossings of the survival functions. This feature is particularly relevant in applications where survival curves exhibit intersections, which often occur in medical and reliability studies.

Our proposed test, based on the supremum of the difference of Kaplan-Meier estimators, offers several advantages over traditional methods such as the log-rank test:
\begin{enumerate}

\item Greater sensitivity to survival function crossings: Traditional tests, like the log-rank test, focus on global differences between survival curves and may fail to detect structural changes, such as dominance relationships or crossings. Our approach explicitly accounts for these aspects, improving the interpretability of survival comparisons.

\item Improved detection of stochastic dominance: Our test is specifically designed to assess whether one survival function consistently lies above another, providing a more informative alternative to conventional hypothesis testing frameworks.

\item Computational efficiency and implementation in \texttt{R}: The test is easy to compute using standard statistical software. The estimation of $p$-values can be efficiently performed using multivariate normal approximations, making it practical for large datasets.

\end{enumerate}

Despite its strengths, the proposed method has certain limitations that suggest promising avenues for future research:
\begin{enumerate}

\item Performance in small sample sizes: While our test has strong asymptotic properties, its performance in small samples could be further refined by improving the estimation of the asymptotic distribution. Future research could explore bootstrap-based refinements or finite-sample adjustments to enhance the test?s accuracy in small-sample scenarios.

\item Adaptation for paired samples: A relevant extension would be the development of a similar test for paired survival data, where observations are correlated (e.g., matched case-control studies or twin studies).

\item Handling time-dependent covariates: In real-world applications, survival probabilities often depend on time-varying covariates. Extending our approach to incorporate time-dependent covariates could provide additional insights into how survival dominance evolves over time.

\end{enumerate}

Overall, our test represents a novel and powerful alternative for comparing survival functions, particularly when the interest lies in detecting dominance or structural changes in survival relationships. By incorporating supremum-based differences in Kaplan-Meier estimators, our approach overcomes the limitations of traditional methods and offers a more nuanced view of survival function comparisons. We believe this method will be particularly useful in medical, epidemiological, and reliability studies, where detecting dominance and survival function crossings is crucial for decision-making.

\section*{Acknowledgement}
This work has been funded by Agencia Estatal de Investigación (AEI) del Ministerio de Ciencia e Innovación and cofunded by Fondo Europeo de Desarrollo Regional (FEDER), project Comparaciones Estocásticas Bajo Estructuras de Dependencia con Aplicaciones a Fiabilidad y Finanzas. Teoría e Inferencia (PID2022-137396NB-I00).

\bibliographystyle{plain}
\bibliography{paper-ref}
\end{document}